\documentclass[fleqn]{article}
\usepackage{amsmath}
\usepackage{amsthm}
\usepackage{amssymb}
\hoffset=-1.2cm \voffset= -2.0cm \textwidth=14.0cm \textheight=23.8cm
\setlength\arraycolsep{2pt}
\newtheorem*{lem}{Lemma}
\newtheorem*{thm}{Theorem}
\newtheorem*{defn}{Definition}
\DeclareMathOperator{\Inv}{Inv}
\DeclareMathOperator{\Tr}{Tr}
\DeclareMathOperator{\ran}{ran}
\DeclareMathOperator{\spn}{span}

\begin{document}
\Large
\begin{center}
{\bf Special entangled quantum systems and the Freudenthal construction}
\end{center}
\large
\vspace*{-.1cm}
\begin{center}
P\'eter Vrana and P\'eter L\'evay  

\end{center}

\vspace*{-.4cm} \normalsize
\begin{center}
$^{1}$Department of Theoretical Physics, Institute of Physics, Budapest University of\\ Technology and Economics, H-1521 Budapest, Hungary

\vspace*{.2cm} (10 February 2009) \end{center}

\vspace*{-.3cm} \noindent \hrulefill

\vspace*{.1cm} \noindent {\bf Abstract}

\noindent We consider special quantum systems containing both distinguishable
and identical constituents. It is shown that for these systems
the Freudenthal construction based on cubic Jordan algebras naturally defines
entanglement measures invariant under the group of stochastic local operations and classical communication (SLOCC).
For this type of multipartite entanglement the SLOCC classes can be explicitly given.
These results enable further explicit constructions of multiqubit entanglement measures for distinguishable constituents by embedding them into identical fermionic ones. 
We also prove that the Pl\"ucker relations for the embedding system provide a
sufficient and necessary condition for the separability of the embedded one.
We argue that this embedding procedure
can be regarded as a convenient representation for quantum systems of
particles which are \emph{really} indistinguishable but for some reason they are not in the same state of some inner degree of freedom.

\vspace*{.3cm}
\noindent
{\bf PACS:} 02.40.Dr, 03.65.Ud, 03.65.Ta, 03.67.-a\\
{\bf Keywords:} Entanglement measures ---  Fermionic entanglement --- Multiqubit systems  --- Freudenthal construction \\ \hspace*{1.95cm} 

\vspace*{-.2cm} \noindent \hrulefill

\section{Introduction}

It has become by now a common wisdom that quantum entanglement can succesfully be regarded as a resource for performing different tasks connected with the processing of quantum information.
However, in order to fully exploit further exciting possibilities still hidden in this resource it has to be carefully quantified and its different types classified.
The desire to achieve this goal has initiated a detailed mathematical study
on the construction and structure of entanglement measures with particular physical relevance.
Such measures (or witnesses) are real-valued functions on pure or mixed entangled states
capable of grasping a particular aspect of quantum entanglement of the physical system such states represent.
For a subclass of entanglement measures the basic property is their invariance under a 
particular group of transformations  representing admissible local operations
on the entangled subsystems of the physical system.
Two important class of such transformations are the group of local unitary transformations (LU) and the so called SLOCC group\cite{Dur} 
corresponding to stochastic local operations and classical communication.
 For the LU and SLOCC classification of
 systems of {\it distinguishable} constituents characterized by
 either pure or mixed states a great variety of results is available
 \cite{Virmani,Horodecki,Bengtsson}.
 
  However, much less is known about the structure of multipartite
   entanglement measures and the corresponding entanglement classes 
   for systems with
   {\it indistinguishable} constituents. For {\it bipartite}
   fermionic and bosonic systems a number of useful results
   exists\cite{Annals,Li,Paskauskas,Ghirardi,Sanders,Schlie,Gittings,LNP}
   related to the existence of a variant of the conventional Schmidt decomposition.
For the classification of multipartite entanglement of fermionic and bosonic sytems Eckert et.al.\cite{Annals} provided an analysis with some hints on          how useful entanglement measures should be constructed.
As a next step in our recent paper we have shown that for tripartite fermionic systems with six single particle states a genuine measure of tripartite entanglement exists and the corresponding SLOCC classification can be fully given\cite{LPVP}. 
The striking feature of this classification is its similarity with the corresponding one found for three-qubits\cite{Dur}. 
We have shown that this correspondence between a tripartite system with identical and a tripartite one with indistinguishable constituents has its roots in their common underlying mathematical structure related to Freudenthal systems
based on cubic Jordan algebras\cite{Mc,Freudenthal,Krutelevich}.

Freudenthal triple systems have already made their debut to physics within the realm of string theory and supergravity. Such systems can be used as a representation
of the charge vector space occurring when studying certain black hole solutions in $N=8$, $d=4$ supergravity\cite{Gun1,Gun2,Gun3,Mal1,Ferr1}.
Recently striking multiple relations have been established between the physics of such stringy black hole solutions and quantum information theory\cite{Duff1,Linde,Lev1,Ferr2,Lev2,Lev3,Ferr3,Duff2, Duff3, Lev4}.
Though this "black hole analogy" still begs for a physical basis, the underlying correspondences have repeatedly proved to be useful for obtaining new insights into one of these fields by exploiting the methods established within the other.
Since Freudenthal triples have already turned out to be important in the string theoretic context the idea is to use this algebraic structure also in quantum information\cite{Leron1,Duff3}. A latest variation on this theme that appeared in the literature is a Freudenthal triple based reconsideration of three qubit entanglement\cite{Leron2}.
Following this trend in this paper we extend the range of applications of Freudenthal systems also to include entangled systems containing {\it both} distinguishable and indistinguishable constituents.
The basic idea of our investigation is the one of embedding one type of system
into another one.
Since Freudenthal systems are too special retaining merely the idea of embedding from them, in the second half of the paper we  explore this principle in a more general context.

The plan of the paper is as follows. In Section 2. we consider Freudenthal systems giving rise to combined systems containing bosonic and fermionic constituents.
After reconsidering the case of three fermions with six single particle states, we discuss the systems containing one distinguished qubit and two bosonic qubits, three bosonic qubits, and a one with one qubit and two fermions with four single particle states. Then we give the representatives of SLOCC equivalence classes.
As a generalization in Section 3. we discuss systems with distinguishable constituents that can be embedded into fermionic ones.
We investigate issues of separability, and their relation to the Pl\"ucker relations well-known from multilinear algebra.
Then we concentrate on  a phenomenon called the splitting of SLOCC classes a method which might prove to be a useful tool in providing entanglement classification in more complicated systems.
As a next step we introduce a family of SLOCC-invariants for fermionic systems.
We conclude with an explicit calculation of reduced density matrices
for our embedded systems.
A possible physical interpretation of our embedding procedure is briefly mentioned.

\section{Freudenthal systems}

In our recent paper\cite{LPVP} we have shown that a genuine tripartite          entanglement
measure for three fermions
with six single particle states can be constructed                              using Freudenthal's construction
applied to the cubic Jordan algebra of $J_3=M_3(\mathbb{C})$.
As a starting point in this section we recap our results.

The Jordan algebra in question is the linear space of
$3\times 3$ complex matrices equipped with the Jordan product $x\bullet y=\frac{
1}{2}(xy+yx)$.
On this Jordan algebra we have the cubic norm form $N(A)=\det A$, the sharp map
$A\mapsto A^{\sharp}$ satisfying $AA^{\sharp}=A^{\sharp}A=N(A)I_{3}$ for all    $A\in J_3$
and
the trace bilinear form $(A,B)\mapsto\Tr(AB)$.
(For a brief summary of the relevant material on such                           structures see the Appendix.)
It can be shown that the group of transformations of the                        Freudenthal triple system
$\mathfrak{M}=\mathbb{C}\oplus\mathbb{C}\oplus J_{3}\oplus J_{3}$ preserving its quartic
form is precisely $SL(6,\mathbb{C})$ and the representation of this group on
$\mathfrak{M}$ is isomorphic to $\bigwedge^{3}V_{6}$ where $V_{n}$              denotes the standard
representation of $SL(n,\mathbb{C})$ on the vector space of                     $n$-tuples of complex
numbers. This group is a subgroup of the SLOCC (stochastic local operations and classical communication\cite{Dur}) group i.e. $GL(6, \mathbb{C})$ acting for three fermions with six single particle states.

An isomorphism clarifying such issues can be explicitely given as follows. Let $\{e_{1},e_{2},e_{3},$ $e_{\bar{1}}\equiv e_{4},e_{\bar{2}}\equiv e_{5},e_{\bar{3}}\equiv e_{6}\}$
be an orthonormal basis of $\mathbb{C}^6$, and let $x\wedge y\wedge z$ denote
the normalized wedge product of the vectors $x,y,z\in\mathbb{C}^6$:
\begin{eqnarray}
x\wedge y\wedge z
& = & \frac{1}{\sqrt{6}}(x\otimes y\otimes z+y\otimes z\otimes x+z\otimes x\otimes y \nonumber \\
& & {}-x\otimes z\otimes y-z\otimes y\otimes x-y\otimes x\otimes z)
\end{eqnarray}
Using these notations a three-fermion state may be written as
\begin{equation}
P=\sum_{1\le a<b<c\le \bar{3}}P_{abc}e_{a}\wedge e_{b}\wedge e_{c}
\end{equation}
with the $20$ independent coefficients satisfying the condition
\begin{equation}
\sum_{1\le a<b<c\le \bar{3}}|P_{abc}|^{2}=1
\end{equation}
meaning that the norm of the state is $1$. The corresponding element
of $\mathfrak{M}$ is $x=(\alpha,\beta,A,B)$ where
\begin{equation}
\alpha=P_{123}\quad\beta=P_{\bar{1}\bar{2}\bar{3}}\quad
A=\left[\begin{array}{ccc}
P_{1\bar{2}\bar{3}}  &  P_{1\bar{3}\bar{1}}  &  P_{1\bar{1}\bar{2}}  \\
P_{2\bar{2}\bar{3}}  &  P_{2\bar{3}\bar{1}}  &  P_{2\bar{1}\bar{2}}  \\
P_{3\bar{2}\bar{3}}  &  P_{3\bar{3}\bar{1}}  &  P_{3\bar{1}\bar{2}}
\end{array}\right]\quad
B=\left[\begin{array}{ccc}
P_{\bar{1}23}  &  P_{\bar{1}31}  &  P_{\bar{1}12}  \\
P_{\bar{2}23}  &  P_{\bar{2}31}  &  P_{\bar{2}12}  \\
P_{\bar{3}23}  &  P_{\bar{3}31}  &  P_{\bar{3}12}
\end{array}\right]
\end{equation}
Then the quartic polynomial preserved by the action of $SL(6,\mathbb{C})$ is
\begin{equation}
T=4([\Tr(AB)-\alpha\beta]^2-4\Tr(A^{\sharp}B^{\sharp})+4\alpha\det A + 4\beta\det B )
\end{equation}
and the tripartite entanglement measure is $\mathcal{T}_{123}=|T|$. Since under
the action of $GL(6,\mathbb{C})$ this quantity takes up a nonzero factor, one
immediately concludes that there must be at least two SLOCC equivalence classes
of three-fermion states. In fact, by introducing the dual state and utilizing
Pl\"ucker's relations one can complete the classification and it turns out that
we have four SLOCC orbits: the separable one, the biseparable one, and two different
types of true tripartite entanglement\cite{LPVP}.

\subsection{Three qubits}

This classification resembles the one of the three-qubit system\cite{Dur,Leron1}, where the well-known
representatives of the inequivalent classes with tripartite entanglement are
the $W$ and $GHZ$ states. One may suspect that there is a connection between the
two, and this indeed is the case. By looking at special three-fermion states one
may observe that the space of three-qubit states $\bigotimes_{i=1}^{3}\mathbb{C}^2$
can be injected into our three-fermion one in such a way that the three-tangle\cite{Kundu}
defined by Cayley's hyperdeterminant can be viewed as a special case of the
quartic above.

To this end we keep only the amplitudes with three different numbers in the subscript
forgetting the overbars for a moment. We have $8$ such coefficients which is the
number of coefficients needed to describe a three-qubit state. The natural way
to do this is to choose an orthonormal basis $\{f_{0},f_{1}\}\in \mathbb{C}^{2}$ and take
three-fold tensor products of them (computational basis). Now let us map an
element $f_{i}\otimes f_{j}\otimes f_{k}$ of this basis to
$e_{1+3i}\wedge e_{2+3j}\wedge e_{3+3k}\in\bigwedge^{3}\mathbb{C}^{6}$ i.e. the
overbar indicates $1$ and the lack of it indicates $0$ on the place indexed by
the number in the subscript. To a three-qubit state
\begin{equation}
a=\sum_{i,j,k\in\{0,1\}}a_{ijk}f_{i}\otimes f_{j}\otimes f_{k}
\end{equation}
we associate this way an element $x=(\alpha,\beta,A,B)$ of $\mathfrak{M}$ where
\begin{equation}
\alpha=a_{000}\quad\beta=a_{111}\quad
A=\left[\begin{array}{ccc}
a_{011}  &  0  &  0  \\
0  &  a_{101}  &  0  \\
0  &  0  &  a_{110}
\end{array}\right]\quad
B=\left[\begin{array}{ccc}
a_{100}  &  0  &  0  \\
0  &  a_{010}  &  0  \\
0  &  0  &  a_{001}
\end{array}\right]
\end{equation}
For this element $\mathcal{T}_{123}$ equals the three-tangle of $a$ (using decimal
notation):
\begin{eqnarray}
T & = & 4((a_0a_7)^2+(a_1a_6)^2+(a_2a_5)^2+(a_3a_4)^2)  \nonumber \\
  & & -8(a_0a_7a_1a_6+a_0a_7a_2a_5+a_0a_7a_3a_4  \nonumber \\
  & & {}+a_1a_6a_2a_5+a_1a_6a_3a_4+a_2a_5a_3a_4)  \nonumber \\
  & & +16(a_0a_3a_5a_6+a_7a_4a_2a_1)
\end{eqnarray}
In this way the three-qubit system can be embedded in the three-fermion one.

An other way to look at the similarity between the two systems is obtained by
observing that starting with the cubic Jordan algebra $J_{1+1+1}=\mathbb{C}\oplus\mathbb{C}\oplus\mathbb{C}$
the Freudenthal construction leads to the $V_{2}\otimes V_{2}\otimes V_{2}$ representation
of the group $SL(2,\mathbb{C})^3$, and the quartic polynomial preserved by the
action of the group is Cayley's hyperdeterminant. In the appendix it is shown
that $J_{1+1+1}$ is isomorphic to the subalgebra of $J_{3}$ of diagonal matrices.

For an element $x=(x_{1},x_{2},x_{3})\in J_{1+1+1}$ we have a cubic norm $N(x)=x_{1}x_{2}x_{3}$,
the sharp map assigning $x^{\sharp}=(x_{2}x_{3},x_{1}x_{3},x_{1}x_{2})$ to $x$ and
on $J$ we have a bilinear form whose value is $(x,y)=x_{1}y_{1}+x_{2}y_{2}+x_{3}y_{3}$
for $y=(y_{1},y_{2},y_{3})$.

We see that the injection of the space of three-qubit states into the space of
three-fermion states can be done at the Jordan algebra level. Moreover the quartic
invariant is based entirely on the cubic Jordan algebra structure in both cases.
It is not surprising therefore that the three-tangle of a three-qubit state can
be obtained also by first taking the associated special three-fermion state and
then calculating the value of the quartic invariant on it.

\subsection{One distinguished qubit with two bosonic qubits}

According to the literature on Freudenthal triple systems\cite{Krutelevich} there are three more Jordan
algebras for which the Freudenthal construction yields a representation that
has a natural interpretation in quantum information theory. These are
$J_{1}=\mathbb{C}$, $J_{1+1}=\mathbb{C}\oplus\mathbb{C}$ and
$J_{1+2}=\mathbb{C}\oplus M_{2}(\mathbb{C})$. It is shown in the Appendix
that these are all isomorphic to subalgebras of $J_{3}$.

The first two cases correspond to three bosonic qubits and a composite system
consisting of one qubit and two other indistinguishable bosonic qubits. The
Hilbert space of these systems can be naturally viewed as subspaces of the one
describing three qubits so one might expect that these can be embedded into the
latter much like the three-qubit system is embedded in the three-fermion one.
The last one corresponds to a system containing a qubit and two indistinguishable
fermionic particles with four single particle states.

First take a look at $J_{1+1}$. With the Freudenthal construction we obtain a
representation of $SL(2,\mathbb{C})^2$ on
$\mathbb{C}\oplus\mathbb{C}\oplus J_{1+1}\oplus J_{1+1}$ that is isomorphic to
$V_{2}\otimes Sym^{2}V_{2}$. This enables us to classify entangled states in
the space of a distinguishable and two bosonic qubits.

Let $\{e_{0},e_{1}\}$ be the computational basis of $\mathbb{C}^2$, and let
$f_{0}=e_{0}\otimes e_{0}$, $f_{1}=e_{0}\otimes e_{1}+e_{1}\otimes e_{0}$ and $f_{2}=e_{1}\otimes e_{1}$.
Now a normalized vector in $\mathbb{C}^2\otimes Sym^{2}\mathbb{C}^2$ may be
written as
\begin{equation}
b=\sum_{i=0}^{1}\sum_{j=0}^{2}b_{ij}e_{i}\otimes f_{j}
\end{equation}
where
\begin{equation}
\sum_{i=0}^{1}(|b_{i0}|^2+2|b_{i1}|^2+|b_{i2}|^2)=1
\end{equation}
The corresponding three-fermion state is given by $x=(\alpha,\beta,A,B)\in\mathfrak{M}$ where
\begin{equation}
\alpha=b_{00}\quad\beta=b_{12}\quad
A=\left[\begin{array}{ccc}
b_{02}  &  0  &  0  \\
0  &  b_{11}  &  0  \\
0  &  0  &  b_{11}
\end{array}\right]\quad
B=\left[\begin{array}{ccc}
b_{10}  &  0  &  0  \\
0  &  b_{01}  &  0  \\
0  &  0  &  b_{01}
\end{array}\right]
\end{equation}
For the state $b$ we have the quartic invariant:
\begin{eqnarray}
T & = & 4(b_{00}^{2}b_{12}^{2}+b_{02}^{2}b_{10}^{2})+16(b_{11}^{2}b_{00}b_{02}+b_{01}^2b_{10}b_{12})  \nonumber \\
  & & -8b_{00}b_{02}b_{10}b_{12}-16(b_{01}b_{02}b_{10}b_{11}+b_{00}b_{01}b_{11}b_{12})
\end{eqnarray}
We are not aware of any application of the quartic invariant above within quantum information theory. However, it is interesting to note that our invariant appears within the realm of  black hole solutions in string theory and quantum gravity.
The model in question is the so called $st^2$ model\cite{Bellucci1} which can be regarded as
a one coming from the stu-model\cite{Duff4,Behrndt} after a $t=u$ degeneracy.
In this model the black hole entropy is expressed in terms of $6$ charges ($3$ magnetic and $3$ electric) which can be mapped bijectively to the $6$ amplitudes of our state $b$.
This correspondence between an entanglement measure on one side and the black hole entropy formula for a particular black hole solution on the other forms the basis of the black hole analogy our guiding principle in constructing this measure.

\subsection{Three bosonic qubits}

Now let us turn to the Jordan algebra $J_{1}=\mathbb{C}$ in which the norm of
an element is simply the cube of it, the sharp means taking the square, and the
trace bilinear form of two elements $x,y\in J_{1}$ is $3xy$. Again after some
calculation one can show that $J_{1}$ is a subalgebra of $J_{1+1}$ the inclusion
map being $x\mapsto(x,x)$. The Freudenthal construction in this case leads to
a four dimensional representation of $SL(2,\mathbb{C})$ isomorphic to $Sym^3 V_{2}$
which is related to the system of three indistinguishable bosonic qubits.

A general normalized state in $Sym^3 \mathbb{C}^2$ may be written as
\begin{eqnarray}
c & = & c_{0}e_{0}\otimes e_{0}\otimes e_{0}+c_{3}e_{1}\otimes e_{1}\otimes e_{1} \nonumber \\
  & & +c_{1}(e_{1}\otimes e_{0}\otimes e_{0}+e_{0}\otimes e_{1}\otimes e_{0}+e_{0}\otimes e_{0}\otimes e_{1}) \nonumber \\
  & & +c_{2}(e_{0}\otimes e_{1}\otimes e_{1}+e_{1}\otimes e_{0}\otimes e_{1}+e_{1}\otimes e_{1}\otimes e_{0})
\end{eqnarray}
where $|c_{0}|^2+|c_{3}|^2+3(|c_{1}|^2+|c_{2}|^2)=1$. To this we associate the
element $x=(c_{0},c_{3},c_{2}I_{3},c_{1}I_{3})$ in $\mathfrak{M}$ where $I_n$ denotes the
identity element of $M_{n}(\mathbb{C})$. For this state the quartic invariant
is:
\begin{equation}
T=4c_{0}^{2}c_{3}^{2}-12c_{1}^{2}c_{2}^{2}-24c_{0}c_{1}c_{2}c_{3}+16(c_{0}c_{2}^{3}+c_{3}c_{1}^{3})
\end{equation}
Our entanglement measure for three bosonic qubits has appeared in the context of stringy black holes as the black hole entropy formula in the so called $t^3$-model\cite{Bellucci1,Vafa}.
For the interesting geometry of three bosonic qubits see the paper of Brody et.al.\cite{Brody}.

\subsection{One qubit and two fermions with four single particle states}

The remaining Jordan algebra is $J_{1+2}=\mathbb{C}\oplus M_{2}(\mathbb{C})$.
Let $x=(\alpha,x_{0})$ and $y=(\beta,y_{0})$ be two elements of $J_{1+2}$.
The cubic norm form is given by $N(x)=\alpha\det x_{0}$, the sharp map is
$x^{\sharp}=(\det x_{0},\alpha(\Tr x_{0})I_{2}-\alpha x_{0})$, and finally the
trace bilinear map in this case is $(x,y)\mapsto \alpha\beta+\Tr(x_0y_0)$.
Once again one can check that this Jordan algebra can naturally be viewed as a
subalgebra of $J_{3}$ namely it is isomorphic to the subalgebra of block-diagonal
matrices with a $1\times 1$ and a $2\times 2$ block in the diagonal.

We know that the Freudenthal construction applied to $J_{1+2}$ yields a
representation of $SL(2,\mathbb{C})\times SL(4,\mathbb{C})$ isomorphic to
$V_{2}\otimes\bigwedge^{2}V_{4}$. Moreover it is known that the fermionic
measure describing the bipartite entanglement of two fermions with four single
particle states reduces to the two-qubit concurrence\cite{Gittings,LPVP} in the same way as the
three-fermion measure reduces to the three-tangle, hence one may expect that in
a sense this system fits between the three-fermion and the three-qubit one. This
expectation is further supported by the fact that we have the embeddings
$J_{1+1+1}\subset J_{1+2}\subset J_{3}$.

Let us see how this works explicitely. Let $\{e_{0},e_{1}\}$ and $\{f_{0},f_{1},f_{2},f_{3}\}$
be the canonical basis of $\mathbb{C}^{2}$ and $\mathbb{C}^{4}$ respectively.
A state $d\in\mathbb{C}^2\otimes\bigwedge^{2}\mathbb{C}^4$ may be written as
\begin{equation}
d=\sum_{i=0}^{1}\sum_{0\le j<k\le 3}d_{ijk}e_{i}\otimes(f_{j}\wedge f_{k})
\end{equation}
the amplitudes being antisymmetric in the second and third index. The condition
of being normalized means that
\begin{equation}
\sum_{i=0}^{1}\sum_{0\le j<k\le 3}|d_{ijk}|^{2}=1
\end{equation}
Now we relate the six-state labels to the two-state and four-state ones as
$(1,\bar{1})\mapsto(0,1)$ and $(2,3,\bar{2},\bar{3})\mapsto(0,1,2,3)$ respectively, and
keep only the 12 coefficients whose index contains precisely one of $1$ and $\bar{1}$.
We associate to $d$ the element $x=(\alpha,\beta,A,B)\in\mathfrak{M}$ where
\begin{equation}
\alpha=d_{001}\quad\beta=d_{123}\quad
A=\left[\begin{array}{ccc}
d_{023}  &  0  &  0 \\
0  &  d_{103}  &  d_{120} \\
0  &  d_{113}  &  d_{121}
\end{array}\right]\quad
B=\left[\begin{array}{ccc}
d_{101}  &  0  &  0 \\
0  &  d_{021}  &  d_{002} \\
0  &  d_{031}  &  d_{003}
\end{array}\right]
\end{equation}
For this state the value of the quartic tripartite entanglement measure is
\begin{eqnarray}
T
 & = & 4((d_{023}d_{101})^2+(d_{021}d_{103})^2+(d_{002}d_{113})^2  \nonumber \\
 & & +(d_{031}d_{120})^2+(d_{003}d_{121})^2+(d_{001}d_{123})^2)  \nonumber \\
 & & +8(d_{002}d_{021}d_{103}d_{113}+d_{021}d_{031}d_{103}d_{120}  \nonumber \\
 & & {}+d_{002}d_{003}d_{113}d_{121}+d_{003}d_{031}d_{120}d_{121})  \nonumber \\
 & & +16(d_{003}d_{021}d_{113}d_{120}+d_{001}d_{023}d_{103}d_{121}  \nonumber \\
 & & {}+d_{002}d_{031}d_{103}d_{121}+d_{003}d_{021}d_{101}d_{123})  \nonumber \\
 & & -16(d_{001}d_{023}d_{113}d_{120}+d_{002}d_{031}d_{101}d_{123})  \nonumber \\
 & & -8(d_{021}d_{023}d_{101}d_{103}+d_{002}d_{023}d_{101}d_{113}+d_{023}d_{031}d_{101}d_{120}  \nonumber \\
 & & {}+d_{002}d_{031}d_{113}d_{120}+d_{003}d_{023}d_{101}d_{121}+d_{003}d_{021}d_{103}d_{121}  \nonumber \\
 & & {}+d_{001}d_{023}d_{101}d_{123}+d_{001}d_{021}d_{103}d_{123}+d_{001}d_{002}d_{113}d_{123}  \nonumber \\
 & & {}+d_{001}d_{031}d_{120}d_{123}+d_{001}d_{003}d_{121}d_{123})
\end{eqnarray}

\subsection{Representatives of SLOCC equivalence classes}

\begin{table}[b]
\centering
\begin{tabular}{c|c|l}
$\mathcal{H}$  &  $G$  &  remark  \\
\hline
$\mathcal{H}_0=\bigwedge^3\mathbb{C}^6$  &  $G_0=GL(6,\mathbb{C})$&  \\
$\mathbb{C}^2\otimes\bigwedge^2\mathbb{C}^4$  &  $GL(2,\mathbb{C})\times GL(4,\mathbb{C})$  &  $A,B\in M_{1}(\mathbb{C})\oplus M_{2}(\mathbb{C})$\\
$\mathbb{C}^2\otimes\mathbb{C}^2\otimes\mathbb{C}^2$  &  $GL(2,\mathbb{C})^3$  &  $A,B$ diagonal  \\
$\mathbb{C}^2\otimes Sym^2\mathbb{C}^2$  &  $GL(2,\mathbb{C})^2$  &  $A,B\in M_{1}(\mathbb{C})\oplus \mathbb{C}\cdot I_{2}$  \\
$Sym^3\mathbb{C}^2$  &  $GL(2,\mathbb{C})$  &  $A,B\in\mathbb{C}\cdot I_{3}$  \\

\end{tabular}
\caption{\label{tab:subs}Subspaces of $\mathfrak{M}$ associated to Hilbert spaces $\mathcal{H}\subset\mathcal{H}_0$ describing various quantum mechanical systems with SLOCC group $G\subset G_0$.}
\end{table}
To sum up, we have the chain of embeddings of Jordan algebras $J_{1}\subset J_{1+1}\subset J_{1+1+1}\subset J_{1+2}\subset J_{3}$
that gives rise via Freudenthal's construction to the chain of embeddings of
Hilbert spaces $Sym^{3}\mathbb{C}^2\subset \mathbb{C}^2\otimes Sym^2\mathbb{C}^2\subset\mathbb{C}^2\otimes\mathbb{C}^2\otimes\mathbb{C}^2\subset\mathbb{C}^2\otimes\bigwedge^2\mathbb{C}^4\subset\bigwedge^{2}\mathbb{C}^6$.
The appropriate subspaces of $\mathfrak{M}$ are shown in Table \ref{tab:subs}
along with their SLOCC group. These embeddings are compatible with the SLOCC
classification of entanglement in the sense that SLOCC orbits of any of these
systems are subsets of the intersections of SLOCC orbits of the three-fermion
Hilbert space with the appropriate subspace. In order to find representatives
of various entanglement classes it is therefore enough to look for them in the
smallest possible subspace then interpret it as elements of the larger Hilbert
spaces. These representatives can be chosen to be the following ones:
\begin{eqnarray}
GHZ & = & \frac{1}{\sqrt{2}}(1,1,0,0)  \\
W & = & \frac{1}{\sqrt{3}}(0,0,0,I_{3})  \\
B_{1} & = & \frac{1}{\sqrt{2}}(1,0,\left[\begin{array}{ccc}
1 & 0 & 0 \\
0 & 0 & 0 \\
0 & 0 & 0
\end{array}\right],0)  \\
B_{2} & = & \frac{1}{\sqrt{2}}(1,0,\left[\begin{array}{ccc}
0 & 0 & 0 \\
0 & 1 & 0 \\
0 & 0 & 0
\end{array}\right],0)  \\
B_{3} & = & \frac{1}{\sqrt{2}}(1,0,\left[\begin{array}{ccc}
0 & 0 & 0 \\
0 & 0 & 0 \\
0 & 0 & 1
\end{array}\right],0)  \\
S & = & (1,0,0,0)
\end{eqnarray}
The $GHZ$ and $W$ states show tripartite entanglement, $B_{i}$ is a biseparable and
$S$ is a separable state. Apart from $B_{i}$ these can be found in the system of
three bosonic qubits, but the relations characterizing states of rank at most two
imply separability in this case. Therefore the representative of the biseparable
class is chosen from the larger Hilbert space $\mathbb{C}^2\otimes Sym^{2}\mathbb{C}^2$.
Of the biseparable subclasses only $B_{1}$ is present in the latter, all can be
found in the three-qubit case, $B_{2}$ and $B_{3}$ are equivalent in
$\mathbb{C}^{2}\otimes\bigwedge^2\mathbb{C}^4$, and all three are equivalent in
the largest Hilbert-space $\bigwedge^3\mathbb{C}^6$.
Table \ref{tab:reps} shows these states for each system.

\begin{table}[h]
\centering
\begin{tabular}{c|c}
space ($\mathcal{H}$)  &  representatives  \\
\hline
  &  $GHZ=\frac{1}{\sqrt{2}}(e_{1}\wedge e_{2}\wedge e_{3}+e_{4}\wedge e_{5}\wedge e_{6})$  \\
$\bigwedge^3\mathbb{C}^6$
  &  $W=\frac{1}{\sqrt{3}}(e_{4}\wedge e_{2}\wedge e_{3}+e_{1}\wedge e_{5}\wedge e_{3}+ e_{1}\wedge e_{2}\wedge e_{6})$  \\
  &  $B_{1}=\frac{1}{\sqrt{2}}(e_{1}\wedge e_{2}\wedge e_{3}+e_{1}\wedge e_{5}\wedge e_{6})$  \\
  &  $S=e_{1}\wedge e_{2}\wedge e_{3}$  \\
\hline
  &  $GHZ=\frac{1}{\sqrt{2}}(e_{0}\otimes(f_{0}\wedge f_{1})+e_{1}\otimes(f_{2}\wedge f_{3}))$  \\
$\mathbb{C}^2\otimes\bigwedge^2\mathbb{C}^4$
  &  $W=\frac{1}{\sqrt{3}}(e_{0}\otimes(f_{2}\wedge f_{3})+e_{1}\otimes(f_{0}\wedge f_{3})+e_{1}\otimes(f_{2}\wedge f_{1}))$  \\
  &  $B_{1}=\frac{1}{\sqrt{2}}e_{0}\otimes(f_{0}\wedge f_{1}+f_{2}\wedge f_{3})$  \\
  &  $B_{2}=\frac{1}{\sqrt{2}}(e_{0}\otimes(f_{0}\wedge f_{1})+e_{1}\otimes(f_{0}\wedge f_{3}))$  \\
  &  $S=e_{0}\otimes(f_{0}\wedge f_{1})$  \\
\hline
  &  $GHZ=\frac{1}{\sqrt{2}}(e_{0}\otimes e_{0}\otimes e_{0}+e_{1}\otimes e_{1}\otimes e_{1})$  \\
$\mathbb{C}^2\otimes\mathbb{C}^2\otimes\mathbb{C}^2$
  &  $W=\frac{1}{\sqrt{3}}(e_{1}\otimes e_{0}\otimes e_{0}+e_{0}\otimes e_{1}\otimes e_{0}+e_{0}\otimes e_{0}\otimes e_{1})$  \\
  &  $B_{1}=\frac{1}{\sqrt{2}}(e_{0}\otimes (e_{0}\otimes e_{0}+e_{1}\otimes e_{1}))$  \\
  &  $B_{2}=\frac{1}{\sqrt{2}}(e_{0}\otimes e_{0}\otimes e_{0}+e_{1}\otimes e_{0}\otimes e_{1})$  \\
  &  $B_{3}=\frac{1}{\sqrt{2}}((e_{0}\otimes e_{0}+e_{1}\otimes e_{1})\otimes e_{0})$  \\
  &  $S=e_{0}\otimes e_{0}\otimes e_{0}$  \\
\hline
  &  $GHZ=\frac{1}{\sqrt{2}}(e_{0}\otimes(e_{0}\otimes e_{0})+e_{1}\otimes(e_{1}\otimes e_{1}))$  \\
$\mathbb{C}^2\otimes Sym^2\mathbb{C}^2$
  &  $W=\frac{1}{\sqrt{3}}(e_{1}\otimes(e_{0}\otimes e_{0})+e_{0}\otimes(e_{1}\otimes e_{0}+e_{0}\otimes e_{1}))$  \\
  &  $B_{1}=\frac{1}{\sqrt{2}}(e_{0}\otimes (e_{0}\otimes e_{0}+e_{1}\otimes e_{1}))$  \\
  &  $S=e_{0}\otimes(e_{0}\otimes e_{0})$  \\
\hline
  &  $GHZ=\frac{1}{\sqrt{2}}(e_{0}\otimes e_{0}\otimes e_{0}+e_{1}\otimes e_{1}\otimes e_{1})$  \\
$Sym^3\mathbb{C}^2$
  &  $W=\frac{1}{\sqrt{3}}(e_{1}\otimes e_{0}\otimes e_{0}+e_{0}\otimes e_{1}\otimes e_{0}+e_{0}\otimes e_{0}\otimes e_{1})$  \\
  &  $S=e_{0}\otimes e_{0}\otimes e_{0}$  \\
\end{tabular}
\caption{\label{tab:reps}Representatives of SLOCC orbits of quantum mechanical systems classified via Freudenthal's construction.}
\end{table}

\section{Embedding systems with distinguishable constituents into fermionic ones}

Looking at these results one might ask what parts of this process can be
generalized. Since the systems considered above are the only ones which are
related to Freudenthal triples we can not extend these results for more
general quantum systems. On the other hand the phenomenon that the Hilbert
space of a smaller system is embedded in the space of a larger one appears
in more general cases.

To be more specific, suppose that we have the invariants $I_{1},\ldots,I_{n}$
under the SLOCC group $G$ represented on a Hilbert space $\mathcal{H}$. It may
happen that there exists a subgroup $G'\subset G$ that can be viewed as a SLOCC
group of a quantum mechanical system to which we can associate a subspace
$\mathcal{H}'\subset\mathcal{H}$ of our original Hilbert space in such a way
that $\mathcal{H}'$ is invariant under the action of $G'$. Obviously the
invariants $I_{1},\ldots,I_{n}$ restricted to $\mathcal{H}'$ are invariant
under the subgroup $G'$ of SLOCC transformations of the smaller system. This
means that if we have two states in $\mathcal{H}'$ that are $G$-inequivalent,
then they will necessarily be $G'$-inequivalent. However, it may also happen
that two $G$-equivalent states cannot be transformed to each other by the smaller
SLOCC group $G'$, in other words, the intersection of a SLOCC orbit in $\mathcal{H}$
with $\mathcal{H}'$ regarded as a set in $\mathcal{H}'$ may split to several
disjoint orbits under the smaller SLOCC group $G'$. In this case further
refinement is needed to obtain full classification of the entangled states in
$\mathcal{H}'$.

\subsection{Separability in arbitrary systems}

As an example suppose that we have $N$ types of fermionic particles, $k_{i}$
of the $i^{\mathrm{th}}$ type having $n_{i}$ single particle states. To this composite
system we associate the Hilbert space
$\mathcal{H}=\bigwedge^{k_{1}}\mathcal{H}_{1}^{(0)}\otimes\cdots\otimes\bigwedge^{k_{N}}\mathcal{H}_{N}^{(0)}$
where $\dim\mathcal{H}_{i}^{(0)}=n_{i}$ and the SLOCC group $G=GL(n_{1},\mathbb{C})\times\cdots\times GL(n_{N},\mathbb{C})$.
This space can be embedded in $\mathcal{K}=\bigwedge^{k_{1}+\cdots+k_{N}}(\mathcal{H}_{1}^{(0)}\oplus\cdots\oplus\mathcal{H}_{N}^{(0)})$
via the linear map $\varphi:\mathcal{H}\to\mathcal{K}$ defined by
\begin{multline}
\varphi:(v_{1}\wedge\ldots\wedge v_{k_{1}})\otimes(v_{k_{1}+1}\wedge\ldots\wedge v_{k_{1}+k_{2}})\otimes\cdots\otimes(v_{k_{1}+\cdots+k_{N-1}+1}\wedge\ldots\wedge v_{k_{1}+\cdots+k_{N}}) \\ \mapsto v_{1}\wedge v_{2}\wedge\ldots\wedge v_{k_{1}+\cdots+k_{N}}
\end{multline}
on tensor products of decomposable vectors and its SLOCC group
$H=GL(n_{1}+\cdots+n_{N},\mathbb{C})$ contains $G$ in an obvious way. Before
continuing we need some preliminary linear algebraic facts stated in what follows.

\begin{lem}
Let $\mathcal{H}_{A}\simeq\mathbb{C}^{n_{A}}$ and $\mathcal{H}_{B}\simeq\mathbb{C}^{n_{B}}$
be two Hilbert-spaces and $\mathcal{H}=\bigwedge^{k_{A}+k_{B}}(\mathcal{H}_{A}\oplus\mathcal{H}_{B})$
for some $k_{A},k_{B}\in\mathbb{N}$. Suppose that $v\in\mathcal{H}\setminus\{0\}$
is decomposable and $v\in\spn\{a_{i_{1}}\wedge\ldots\wedge a_{i_{k_{A}}}\wedge b_{i_{k_{A}+1}}\wedge\ldots\wedge b_{i_{k_{A}+k_{B}}}\vert 1\le i_{1},\ldots,i_{k_{A}}\le n_{A},1\le i_{k_{A}+1},\ldots,i_{k_{A}+k_{B}}\le n_{B}\}=\mathcal{H}_{0}$
where $\{a_{i}\}_{i=1}^{n_{A}}$ and $\{b_{i}\}_{i=1}^{n_{B}}$ are orthonormal
bases in $\mathcal{H}_{A}$ and $\mathcal{H}_{B}$ respectively.
Then $v$ can also be written in the form $v^{A}_{1}\wedge\ldots\wedge v^{A}_{k_{A}}\wedge v^{B}_{1}\wedge\ldots\wedge v^{B}_{k_{B}}$
where $\{v^{A}_{i}\}_{i=1}^{k_{A}}\subset\mathcal{H}_{A}$ and $\{v^{B}_{i}\}_{i=1}^{k_{B}}\subset\mathcal{H}_{B}$.
\end{lem}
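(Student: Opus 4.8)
\emph{Strategy.} The plan is to translate the hypothesis into a statement about the $m$-dimensional subspace $W=\spn\{w_{1},\dots,w_{m}\}$, where $m=k_{A}+k_{B}$ and $v=w_{1}\wedge\dots\wedge w_{m}$ (the $w_{i}$ being necessarily independent since $v\neq 0$), and to show that $v\in\mathcal{H}_{0}$ forces $W=W_{A}\oplus W_{B}$ with $W_{A}\subseteq\mathcal{H}_{A}$ and $W_{B}\subseteq\mathcal{H}_{B}$. A basis of $W$ adapted to this splitting then gives the desired factorisation, the overall scalar relating two decomposables with the same span being absorbed into one of the basis vectors. The structural fact I will lean on is the bigrading $\bigwedge^{m}(\mathcal{H}_{A}\oplus\mathcal{H}_{B})=\bigoplus_{p+q=m}\bigwedge^{p}\mathcal{H}_{A}\otimes\bigwedge^{q}\mathcal{H}_{B}$, under which $\mathcal{H}_{0}$ is precisely the homogeneous piece of $\mathcal{H}_{B}$-degree $k_{B}$; the whole argument is thus that a decomposable vector which happens to be $\mathcal{H}_{B}$-homogeneous admits an $\mathcal{H}_{B}$-homogeneous representative as a product of vectors.

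\emph{First reductions.} I would set $W_{A}=W\cap\mathcal{H}_{A}$, $p=\dim W_{A}$, pick a basis $a_{1},\dots,a_{p}$ of $W_{A}$ and extend it inside $W$ to $a_{1},\dots,a_{p},w_{p+1},\dots,w_{m}$ so that the projections $\beta_{j}:=\pi_{B}(w_{p+j})$ onto $\mathcal{H}_{B}$ are linearly independent (possible since $\ker(\pi_{B}|_{W})=W_{A}$). Writing $w_{p+j}=\alpha_{j}+\beta_{j}$ with $\alpha_{j}\in\mathcal{H}_{A}$, one has $v=c\,a_{1}\wedge\dots\wedge a_{p}\wedge(\alpha_{1}+\beta_{1})\wedge\dots\wedge(\alpha_{m-p}+\beta_{m-p})$ with $c\neq 0$. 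Expanding and reading off the component of top $\mathcal{H}_{B}$-degree $m-p$, namely $c\,a_{1}\wedge\dots\wedge a_{p}\wedge\beta_{1}\wedge\dots\wedge\beta_{m-p}$, which is nonzero because the $a$'s and the $\beta$'s are separately independent and $\mathcal{H}_{A}\cap\mathcal{H}_{B}=0$, and comparing with the fact that $v$ is pure of $\mathcal{H}_{B}$-degree $k_{B}$, forces $m-p=k_{B}$, i.e. $p=k_{A}$.

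\emph{Main step and conclusion.} Now I would use the vanishing of the $\mathcal{H}_{B}$-degree $(k_{B}-1)$ component of $v$, which up to signs is $c\sum_{i=1}^{k_{B}}a_{1}\wedge\dots\wedge a_{k_{A}}\wedge\alpha_{i}\wedge\beta_{1}\wedge\dots\wedge\widehat{\beta_{i}}\wedge\dots\wedge\beta_{k_{B}}$. Replacing each $\alpha_{i}$ by its component $\alpha_{i}''$ in a fixed complement $\mathcal{H}_{A}''$ of $W_{A}$ in $\mathcal{H}_{A}$ (the $W_{A}$-component wedges to zero against $a_{1}\wedge\dots\wedge a_{k_{A}}$), then cancelling the common factor $a_{1}\wedge\dots\wedge a_{k_{A}}$ — legitimate since wedging with this decomposable is injective on $\bigwedge^{k_{B}}$ of the complement $\mathcal{H}_{A}''\oplus\mathcal{H}_{B}$ of $W_{A}$ in $\mathcal{H}_{A}\oplus\mathcal{H}_{B}$ — reduces the relation to $\sum_{i}\pm\,\alpha_{i}''\otimes(\beta_{1}\wedge\dots\wedge\widehat{\beta_{i}}\wedge\dots\wedge\beta_{k_{B}})=0$ in $\mathcal{H}_{A}''\otimes\bigwedge^{k_{B}-1}\mathcal{H}_{B}$. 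Since the $\beta_{1}\wedge\dots\wedge\widehat{\beta_{i}}\wedge\dots\wedge\beta_{k_{B}}$ are linearly independent (the $\beta_{j}$ being independent), this forces $\alpha_{i}''=0$, hence $\alpha_{i}\in W_{A}\subseteq W$, hence $\beta_{j}=w_{k_{A}+j}-\alpha_{j}\in W$. Then $a_{1},\dots,a_{k_{A}},\beta_{1},\dots,\beta_{k_{B}}$ are $m$ independent vectors of $W$, so form a basis, $v$ is proportional to their wedge, and absorbing the constant into $a_{1}$ gives the claimed form. The step I expect to be the main obstacle is this middle one: organising the bidegree bookkeeping cleanly and justifying the cancellation of the factor $a_{1}\wedge\dots\wedge a_{k_{A}}$, together with handling the degenerate cases $k_{A}=0$ or $k_{B}=0$, where parts of the argument become vacuous but the conclusion must still be extracted correctly.
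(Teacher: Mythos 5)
Your argument is correct, and it organizes the proof differently from the paper. The paper works with one fixed decomposition $v=(A_{1}+B_{1})\wedge\ldots\wedge(A_{k}+B_{k})$ and improves it factor by factor: it shows the all-$B$ term must vanish (being orthogonal both to the other bidegree components and to $\mathcal{H}_{0}$), deduces a linear dependence among the $B_{i}$, absorbs it to make one factor lie in $\mathcal{H}_{A}$, and iterates; a second, slightly more delicate iteration (requiring an argument that the dependent vector can be taken among the last $k_{B}$ factors) purifies the remaining factors into $\mathcal{H}_{B}$. You instead work globally with the support subspace $W$, first pinning down $\dim(W\cap\mathcal{H}_{A})=k_{A}$ from the nonvanishing top $\mathcal{H}_{B}$-degree component, then extracting all the conditions $\alpha_{i}\in W_{A}$ simultaneously from the single degree-$(k_{B}-1)$ component via linear independence in $\mathcal{H}_{A}''\otimes\bigwedge^{k_{B}-1}\mathcal{H}_{B}$. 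The two proofs share the same engine (the bigrading of $\bigwedge^{m}(\mathcal{H}_{A}\oplus\mathcal{H}_{B})$ and the fact that $v$ is concentrated in bidegree $(k_{A},k_{B})$ --- the paper's ``orthogonality of terms with different numbers of $\mathcal{H}_{A}$-factors'' is exactly the projection onto graded pieces), but yours avoids both the iteration and the rearrangement bookkeeping, replaces the inner-product argument by a purely linear-algebraic one valid over any field, and makes the structural conclusion $W=W_{A}\oplus\pi_{B}(W)$ explicit. The only points you should spell out when writing it up are the ones you already flagged: the injectivity of $x\mapsto a_{1}\wedge\ldots\wedge a_{k_{A}}\wedge x$ on $\bigwedge^{k_{B}}$ of a complement of $W_{A}$, and the degenerate cases $k_{A}=0$ or $k_{B}=0$, where your first reduction already forces $W\subseteq\mathcal{H}_{B}$ or $W\subseteq\mathcal{H}_{A}$ and the main step is vacuous.
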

\begin{proof}
Let $k=k_{A}+k_{B}$ and $v=v_{1}\wedge v_{2}\wedge\ldots\wedge v_{k}$ be a decomposition
of $v$. Each $v_{i}$ can be uniquely written in the form $v_{i}=A_{i}+B_{i}$
where $A_{i}\in\mathcal{H}_{A}$ and $B_{i}\in\mathcal{H}_{B}$. Now take a look
at the terms in the expanded form of $v=(A_{1}+B_{1})\wedge\ldots\wedge(A_{k}+B_{k})$.
A simple observarion is that two wedge products of elements of $\mathcal{H}_{A}$
and $\mathcal{H}_{B}$ are orthogonal with respect to the induced inner product
if the number of factors from $\mathcal{H}_{A}$ (or $\mathcal{H}_{B}$) differ
in the two products. This means that in $v$ the term $B_{1}\wedge\ldots\wedge B_{k}$
is orthogonal to all the other terms. But since it is orthogonal to $\mathcal{H}_{0}$
too it must be $0$ which is equivalent to stating that the vectors $(B_{i})_{i=1}^{k}$
are linearly dependent. After some rearranging (and possibly including a minus sign)
we can assume that $B_{1}$ can be expressed in the form $\lambda_{2}B_{2}+\cdots+\lambda_{k}B_{k}$.
Now using multilinearity and that wedge product of linearly dependent vectors
is the null vector we can write
\begin{eqnarray}
v & = & (A_{1}+B_{1})\wedge\ldots\wedge(A_{k}+B_{k})  \nonumber \\
  & = & \left(A_{1}+B_{1}-\sum_{i=2}^{k}\lambda_{i}(A_{i}+B_{i})\right)\wedge(A_{2}+B_{2})\wedge\ldots\wedge(A_{k}+B_{k})  \nonumber \\
  & = & A_{1}'\wedge(A_{2}+B_{2})\wedge\ldots\wedge(A_{k}+B_{k})
\end{eqnarray}
for some $A_{1}'\in\mathcal{H}_{A}$. Similar reasoning with the term
$A_{1}'\wedge B_{2}\wedge\ldots\wedge B_{k}$ shows that we can assume that
$B_{2}$ can be written as a linear combination of $B_{3},\ldots,B_{k}$ and
so on finally arriving at the form
$v=A_{1}'\wedge\ldots\wedge A_{k_{A}}'\wedge(A_{k_{A}+1}+B_{k_{A}+1})\wedge\ldots\wedge(A_{k}+B_{k})$.
The number of $B$-s can not be further reduced since the term containing $k_{B}$
factors from $\mathcal{H}_{B}$ is not orthogonal to $\mathcal{H}_{0}$.

After expanding we see that $A_{1}'\wedge\ldots\wedge A_{k_{A}}'\wedge A_{k_{A}+1}\wedge\ldots\wedge A_{k}$
is orthogonal to the other addends and to $\mathcal{H}_{0}$ therefore the factors
are linearly dependent. By rearranging we can identify two cases: either
$A_{1}'$ or $A_{k_{A}+1}$ can be expressed as a linear combination of the
remaining factors. In the first case $A_{1}'$ can not be an element of
$\spn\{A_{2}',\ldots,A_{k_{A}}'\}$ since $v\neq 0$ and therefore we can find
an element in $\{A_{i}\}_{i=k_{A}+1}^{k}$ whose coefficient in the linear
expansion is nonzero leading us to the latter case. We can assume that
$A_{k_{A}+1}=\sum_{i=1}^{k_{A}}\mu_{i}A_{i}'+\sum_{i=k_{A}+2}^{k}\mu_{i}A_{i}$
and using this expansion one can write
\begin{eqnarray}
v & = & A_{1}'\wedge\ldots\wedge A_{k_{A}}'\wedge(A_{k_{A}+1}+B_{k_{A}+1})\wedge\ldots\wedge(A_{k}+B_{k})  \nonumber \\
  & = & A_{1}'\wedge\ldots\wedge A_{k_{A}}'\wedge\left(A_{k_{A}+1}+B_{k_{A}+1}-\sum_{i=1}^{k_{A}}\mu_{i}A_{i}'-\sum_{i=k_{A}+2}^{k}\mu_{i}(A_{i}+B_{i})\right)\wedge  \nonumber \\
  & & {}\wedge(A_{k_{A}+2}+B_{k_{A}+2})\wedge\ldots\wedge(A_{k}+B_{k})  \nonumber \\
  & = & A_{1}'\wedge\ldots\wedge A_{k_{A}}'\wedge B_{k_{A}+1}'\wedge(A_{k_{A}+2}+B_{k_{A}+2})\wedge\ldots\wedge(A_{k}+B_{k})
\end{eqnarray}
for some $B_{k_{A}+1}'\in\mathcal{H}_{B}$. Proceeding the same way we finally
arrive at the form $v=A_{1}'\wedge\ldots\wedge A_{k_{A}}'\wedge B_{k_{A}+1}'\wedge\ldots\wedge B_{k}'$
\end{proof}

\begin{thm}
Let $N\in\mathbb{N}$, $(k_{i})_{i=1}^N$ and $(n_{i})_{i=1}^N$ be $N$-tuples of
positive integers, $\mathcal{H}_{i}^{(0)}$ a Hilbert-space of dimension $n_{i}$,
$\mathcal{H}_{i}=\bigwedge^{k_{i}}\mathcal{H}_{i}^{(0)}$ (for all $i\in\{1,\ldots,N\}$),
$\mathcal{K}^{(0)}=\bigoplus_{i=1}^{N}\mathcal{H}_{i}^{(0)}$ and
$\mathcal{K}=\bigwedge^{k}\mathcal{K}^{(0)}$ where $k=\sum_{i=1}^{N}k_{i}$.
Let $\varphi:\mathcal{H}_{1}\otimes\cdots\otimes\mathcal{H}_{N}\to\mathcal{K}$
be the linear map defined by $(e_{1,j_{1}}\wedge\ldots\wedge e_{1,j_{k_{1}}})\otimes\cdots\otimes(e_{N,j_{k-k_{N}+1}}\wedge\ldots\wedge e_{N,j_{k}})\mapsto e_{1,j_{1}}\wedge\ldots\wedge e_{1,j_{k_{1}}}\wedge\ldots\wedge e_{N,j_{k-k_{N}+1}}\wedge\ldots\wedge e_{N,j_{k}}$
for some orthonormal bases $\{e_{i,j}\}_{j=1}^{k_{i}}\subset\mathcal{H}_{i}^{(0)}$.
Then a vector $v$ in $\mathcal{H}_{1}\otimes\cdots\otimes\mathcal{H}_{N}=\mathcal{H}$
is a tensor product of decomposable vectors in $\mathcal{H}_{i}$ iff $\varphi(v)\in\mathcal{K}$ is
decomposable.
\end{thm}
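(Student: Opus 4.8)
The plan is to prove the two implications separately; the forward one is essentially a tautology, and the reverse one is an iterated application of the Lemma above.

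The easy direction first. If $v=w_{1}\otimes\cdots\otimes w_{N}$ with each $w_{i}=u_{i,1}\wedge\cdots\wedge u_{i,k_{i}}$ decomposable in $\mathcal{H}_{i}=\bigwedge^{k_{i}}\mathcal{H}_{i}^{(0)}$, then $\varphi$, being linear and acting on tensor products of decomposables by concatenating the wedge factors, sends $v$ to $u_{1,1}\wedge\cdots\wedge u_{1,k_{1}}\wedge\cdots\wedge u_{N,1}\wedge\cdots\wedge u_{N,k_{N}}$, which is manifestly decomposable in $\mathcal{K}$. Before the converse I would record two structural facts about $\varphi$. Let $\mathcal{H}_{0}\subset\mathcal{K}$ be the \emph{block-multigraded} subspace spanned by those standard basis wedges of $\mathcal{K}$ that use exactly $k_{i}$ basis vectors from $\mathcal{H}_{i}^{(0)}$ for every $i$. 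Since $\varphi$ carries the product basis of $\mathcal{H}$ assembled from the $\{e_{i,j}\}$ bijectively onto precisely this set of standard basis wedges of $\mathcal{K}$ (written in block-then-increasing order, so with no sign ambiguity), $\varphi$ is injective and $\ran\varphi=\mathcal{H}_{0}$; in particular $\varphi(v)=0$ forces $v=0$, so in the converse we may take $v\neq0$, hence $\varphi(v)\neq0$.

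For the converse proper, assume $\varphi(v)\in\mathcal{H}_{0}$ is decomposable and nonzero, and peel off one block at a time. Apply the Lemma with $\mathcal{H}_{A}=\mathcal{H}_{1}^{(0)}$, $\mathcal{H}_{B}=\bigoplus_{i\ge2}\mathcal{H}_{i}^{(0)}$, $k_{A}=k_{1}$, $k_{B}=k-k_{1}$; its hypothesis holds because $\mathcal{H}_{0}$ lies inside the span described there. This gives $\varphi(v)=u_{1,1}\wedge\cdots\wedge u_{1,k_{1}}\wedge\zeta$ with $u_{1,j}\in\mathcal{H}_{1}^{(0)}$ and $\zeta\in\bigwedge^{k-k_{1}}\mathcal{H}_{B}$ decomposable. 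Since $\varphi(v)\neq0$ the $u_{1,j}$ are linearly independent, so left multiplication by $u_{1,1}\wedge\cdots\wedge u_{1,k_{1}}$ is injective on $\bigwedge^{k-k_{1}}\mathcal{H}_{B}$ and raises multidegree by $(k_{1},0,\ldots,0)$; comparing multidegrees in $\varphi(v)=u_{1,1}\wedge\cdots\wedge u_{1,k_{1}}\wedge\zeta$, whose left side is pure of multidegree $(k_{1},k_{2},\ldots,k_{N})$, forces $\zeta$ to be pure of multidegree $(k_{2},\ldots,k_{N})$ with respect to $\mathcal{H}_{B}=\bigoplus_{i\ge2}\mathcal{H}_{i}^{(0)}$. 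Thus $\zeta$ is once more a nonzero decomposable vector lying in the block-multigraded subspace of $\bigwedge^{k-k_{1}}\bigl(\bigoplus_{i\ge2}\mathcal{H}_{i}^{(0)}\bigr)$, i.e. the Lemma applies again with one fewer block. Iterating (formally an induction on $N$, with the trivial base $N=1$) yields $\varphi(v)=u_{1,1}\wedge\cdots\wedge u_{1,k_{1}}\wedge\cdots\wedge u_{N,1}\wedge\cdots\wedge u_{N,k_{N}}$ with all $u_{i,j}\in\mathcal{H}_{i}^{(0)}$. Finally I set $w=(u_{1,1}\wedge\cdots\wedge u_{1,k_{1}})\otimes\cdots\otimes(u_{N,1}\wedge\cdots\wedge u_{N,k_{N}})\in\mathcal{H}$, a tensor product of decomposables; by the easy direction $\varphi(w)=\varphi(v)$, so injectivity of $\varphi$ gives $v=w$, which finishes the proof.

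The main obstacle I anticipate is the bookkeeping in that last paragraph: the Lemma is stated only for a two-term direct sum, so to run the induction one must check, at each peeling step, that the leftover $\zeta$ is \emph{simultaneously} decomposable and trapped in the correct block-multigraded subspace of the smaller exterior power, so the Lemma's hypotheses are genuinely reinstated. This is exactly what the injectivity of left multiplication by the nonzero decomposable $u_{1,1}\wedge\cdots\wedge u_{1,k_{1}}$, whose support is complementary to $\mathcal{H}_{B}$, is there to secure; everything else is routine.
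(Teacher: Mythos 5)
Your proof is correct and follows essentially the same route as the paper's: the forward direction by linearity, and the converse by iterating the Lemma to strip off one particle-type block at a time. The only (harmless) differences are that you peel from the first block rather than the last, and that you justify re-entering the Lemma's hypotheses via a multidegree argument using injectivity of wedging with $u_{1,1}\wedge\cdots\wedge u_{1,k_{1}}$, whereas the paper instead factors $\varphi$ through the intermediate injections $\varphi_{i}$ and observes that the leftover factor lies in $\ran\varphi_{N-2}$.
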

\begin{proof}
Since both sides of the equation defining $\varphi$ is linear in all
$e_{i,j_{l}}$-s, the image of $v=(v_{1,j_{1}}\wedge\ldots\wedge v_{1,j_{k_{1}}})\otimes\cdots\otimes(v_{N,j_{k-k_{N}+1}}\wedge\ldots\wedge v_{N,j_{k}})$
is $v_{1,j_{1}}\wedge\ldots\wedge v_{1,j_{k_{1}}}\wedge\ldots\wedge v_{N,j_{k-k_{N}+1}}\wedge\ldots\wedge v_{N,j_{k}}$
which is a decomposable element of $\mathcal{K}$.

For the converse observe that by introducing the Hilbert-spaces
$\mathcal{K}_{i}^{(0)}=\mathcal{H}_{1}^{(0)}\oplus\cdots\oplus\mathcal{H}_{i}^{(0)}$
and $\mathcal{K}_{i}=\bigwedge^{k_{1}+\cdots+k_{i}}\mathcal{K}_{i}^{(0)}$, for the linear injections
$\varphi_{i}:\mathcal{K}_{i}\otimes\mathcal{H}_{i+1}\to\mathcal{K}_{i+1};(x_{1}\wedge\ldots\wedge x_{k_{1}+\cdots+k_{i}})\otimes(y_{1}\wedge\ldots\wedge y_{k_{i+1}})\mapsto x_{1}\wedge\ldots\wedge x_{k_{1}+\cdots+k_{i}}\wedge y_{1}\wedge\ldots\wedge y_{k_{i+1}}$
we have $\varphi=\varphi_{N-1}\circ(\varphi_{N-2}\otimes id_{\mathcal{H}_{N}})\circ\ldots\circ(\varphi_{1}\otimes id_{\mathcal{H}_{3}}\otimes\ldots\otimes id_{\mathcal{H}_{N}})$.
Now take a vector $v$ in $\ran\varphi\subset\mathcal{K}$ that is decomposable.
Using the lemma above with $\mathcal{H}_{A}^{(0)}=\mathcal{K}_{N-1}^{(0)}$,
$\mathcal{H}_{B}^{(0)}=\mathcal{H}_{N}$, $k_{A}=k-k_{N}$ and $k_{B}=k_{N}$
we see that $v=x_{1}\wedge\ldots\wedge x_{k-k_{N}}\wedge y_{1}\wedge\ldots\wedge y_{k_{N}}$
for some $\{x_{j}\}_{j=1}^{k-k_{N}}\subset\mathcal{K}_{N-1}^{(0)}$ and $\{y_{j}\}_{j=1}^{k_{N}}\subset\mathcal{H}_{N}^{(0)}$.
This means that $\varphi_{N-1}^{-1}(v)=(x_{1}\wedge\ldots\wedge x_{k-k_{N}})\otimes(y_{1}\wedge\ldots\wedge y_{k_{N}})$.
The first factor is in $\ran\varphi_{N-2}$ and hence we can apply the lemma
to it, and so on, finally obtaining an $N$-fold tensor product of decomposable
elements in $\mathcal{H}_{i}$.
\end{proof}

This theorem applied to our scenario means that a state $\psi$ in $\mathcal{H}$ is
separable (tensor product of decomposable states) iff $\varphi(\psi)\in\mathcal{K}$
is separable (decomposable).

\subsection{Pl\"ucker relations}

Since a state in $\mathcal{K}$ is separable precisely when the Pl\"ucker
relations hold\cite{Hodge,Kasman,LPVP}, we can conclude that the Pl\"ucker relations provide a
sufficient and necessary condition of separability of an arbitrary system
of finitely many particles.

Recall that the Pl\"ucker relations say that 
a $k$ fermionic state with totally antisymmetric amplitudes $P_{j_1j_2\dots j_k}$ is separable iff for any $\mathcal{A}=\{a_{1},\ldots
,a_{k-1}\}\subset I$ and
$\mathcal{B}=\{b_{1},\ldots,b_{k+1}\}\subset I$ the polynomial
\begin{equation}
\Pi_{\mathcal{A},\mathcal{B}}(P)=\sum_{j=1}^{k+1}(-1)^{j-1}P_{a_{1}\ldots a_{k-1}b_{j}}P_{b_{1}\ldots b_{j-1}b_{j+1}\ldots b_{k+1}}
\end{equation}
equals to zero.
However, now our fermionic state is of special kind namely it is of the form
\begin{equation}
\varphi(\psi)=\sum_{J\subset I}P_{j_{1}j_{2}\ldots j_{k}}e_{j_{1}}\wedge e_{j_{2}}\wedge\ldots\wedge e_{j_{k}}
\end{equation}
\noindent
where $J=\{j_{1},\ldots,j_{k}\}$ and $I=\bigcup_{i=1}^{N}I_{i}=\{1,\ldots,n\}$ with
$I_{1}=\{1,\ldots,n_{1}\}, I_{2}=\{n_{1}+1,\ldots,n_{1}+n_{2}\}, \ldots, I_{
N}=\{n-n_{N}+1,\ldots,n\}$
($n=\sum_{i=1}^{N}n_{i}$ and $\{e_{j}\}_{j\in I_{i}}$ is an orthonormal
basis of $\mathcal{H}_{i}^{(0)}$).
Hence we do not need to consider all relations since many of them are
identically zero due to the very special form of the vectors in $\ran\varphi$.
This means that  $P_{j_{1}j_{2}\ldots j_{k}}=0$
if there exists $i\in\{1,\ldots,N\}$ such that $|J\cap I_{i}|\neq k_{i}$.
Using this property of the coefficients $P_{j_{1}\ldots j_{k}}$
one can see that $\Pi_{\mathcal{A},\mathcal{B}}(P)$ is identically zero unless
\begin{equation}
(\exists j\in\{1,\ldots,k+1\})(\forall i\in\{1,\ldots,N\})(|(\mathcal{A}\cup\{b_{j}\})\cap I_{i}|=|(\mathcal{B}\setminus\{b_{j}\})\cap I_{i}|=k_{i})
\end{equation}
which is equivalent to
\begin{multline}
(\exists j\in\{1,\ldots,k+1\})(\forall i\in\{1,\ldots,N\})( (b_{j}\notin I_{i})\textrm{ and }(|\mathcal{A}\cap I_{i}|=|\mathcal{B}\cap I_{i}|=k_{i}))  \\
\textrm{ or }((b_{j}\in I_{i})\textrm{ and }(|\mathcal{A}\cap I_{i}|+1=|\mathcal{B}\cap I_{i}|-1=k_{i})))
\end{multline}
When this holds for some $j$ we have exactly $|(\mathcal{B}\setminus\mathcal{A})\cap I_{i}|$
nonvanishing terms where $i$ is the unique index for which $b_{j}\in I_{i}$.

As a special case take the system of $N$ qubits described by the Hilbert
space $\mathcal{H}=\mathbb{C}^{2}\otimes\cdots\otimes\mathbb{C}^2$. With
the notations above this corresponds to $k_{1}=k_{2}=\ldots=k_{N}=1$ and
$n_{1}=n_{2}=\ldots=n_{N}=2$ therefore we can embed $\mathcal{H}$ in
$\mathcal{K}=\bigwedge^{N}\mathbb{C}^{2N}$. It is not too hard to check that
in this case the Pl\"ucker relations tell us that an element
\begin{equation}
\psi=\sum_{i_{1},\ldots,i_{N}=0}^{1}\psi_{i_{1}\ldots i_{N}} e_{i_{1}}\otimes\cdots\otimes e_{i_{N}}
\end{equation}
is separable if and only if for all $1\le j\le N$ and for all $(i_{1}, \ldots, i_{j-1}, i_{j+1}, \ldots, i_{N}, h_{1}, \ldots, h_{j-1},$ $h_{j+1}, \ldots, h_{N})\in\mathbb{Z}_{2}^{2N-2}$
we have 
\begin{equation}
\psi_{i_{1}\ldots i_{j-1}0 i_{j+1}\ldots i_{N}}\psi_{h_{1}\ldots h_{j-1}1 h_{j+1}\ldots h_{N}} = \psi_{i_{1}\ldots i_{j-1}1 i_{j+1}\ldots i_{N}}\psi_{h_{1}\ldots h_{j-1}0 h_{j+1}\ldots h_{N}}
\end{equation}
which is indeed the separability condition for $N$ qubits.
Notice that for $N=2$ (two qubits) we obtain the result of Gittings and Fischer\cite{Gittings}
of relating the concurrence to the fermionic measure of Schliemann\cite{Schlie} 
for a {\it special} two-fermion state with merely $4$ nontrivial amplitudes.

\subsection{Entangled states}

This scheme of embedding an arbitrary system of finitely many particles has
the property that separability in the larger space implies separability in
the smaller one. Since separable states are SLOCC equivalent to each other
we conclude that this special SLOCC equivalence class does not split into
subclasses when we restrict ourselves to the smaller space and its smaller
SLOCC group. Note however that this by no means is the case with other SLOCC
classes. For example look at the system of four qubits that can be embedded
in $\mathcal{K}=\bigwedge^{4}\mathbb{C}^{8}$. Let $\{e_{2i-1},e_{2i}\}$ be the
computational basis in the Hilbert space of the $i^{\mathrm{th}}$ qubit. Then the states
\begin{eqnarray}
P & = & \frac{1}{\sqrt{2}}e_{1}\otimes e_{3}\otimes(e_{5}\otimes e_{7}+e_{6}\otimes e_{8})  \\
Q & = & \frac{1}{\sqrt{2}}(e_{1}\otimes e_{3}+e_{2}\otimes e_{4})\otimes e_{5}\otimes e_{7}
\end{eqnarray}
cannot be transformed into each other since $P$ is $AB(CD)$-separable but
$Q$ is $(AB)CD$-separable. But their images in $\mathcal{K}$ are
 (for the definition of $\varphi$ see Eq. (23)).
\begin{eqnarray}
\varphi(P) & = & \frac{1}{\sqrt{2}}(e_{1}\wedge e_{3}\wedge e_{5}\wedge e_{7}+e_{1}\wedge e_{3}\wedge e_{6}\wedge e_{8})  \\
\varphi(Q) & = & \frac{1}{\sqrt{2}}(e_{1}\wedge e_{3}\wedge e_{5}\wedge e_{7}+e_{2}\wedge e_{4}\wedge e_{5}\wedge e_{7})
\end{eqnarray}
which can be transformed to each other by applying the element $X\otimes I\otimes I\in GL(8,\mathbb{C})$
therefore they are SLOCC-equivalent in $\mathcal{K}$.
(Here $X=\sigma_1$, i.e. the first Pauli matrix.)

This example convices us that the phenomenon of splitting of SLOCC classes is
not uncommon. In fact it does appear even in the three-qubit case where
$(AB)C$-, $(BC)A$- and $(CA)B$-biseparable states are inequivalent but their
images in the three-fermion system belong to the same class. However, this
embedding is rather special because there is "no room" for this kind of
splitting of multipartite entangled states and this may be a main reason of
the similarity of the classification of entanglement in the two systems.

Despite the fact that the entanglement measures in the embedding system may
be much coarser than needed for full SLOCC classification of the smaller
system, this method might prove to be a useful tool. If the splitting of SLOCC
classes could be fully understood than it would be enough to identify the
entanglement classes of $\bigwedge^{k}\mathbb{C}$ which might have a simpler
structure than the Hilbert space of a general system containing various types
of particles.

Let us now introduce a family of SLOCC-invariants for fermionic systems.
Let $\mathcal{K}^{(0)}=\mathbb{C}^{dk}$ and $\mathcal{K}=\bigwedge^{k}\mathcal{K}^{(0)}$
where $k\in 2\mathbb{N}$ and $d\in\mathbb{N}$. Given a state $P\in\mathcal{K}$
we can take the $d$-fold wedge product of it which lives in $\bigwedge^{dk}\mathbb{C}^{dk}$.
Hence it is invariant under the action of $SL(dk,\mathbb{C})$, and picks up a factor
corresponding to the determinant under the action of $GL(dk,\mathbb{C})$. Let $\xi(P)$
denote the absolute value of this vector:
\begin{equation}
\|P\wedge P\wedge\ldots\wedge P\|=\xi(P)
\end{equation}
Let $P_{i_{1}\ldots i_{k}}$ denote the coefficients of $P$ with respect to the
induced basis:
\begin{equation}
P = \sum_{i_{1}\ldots i_{k}=1}^{dk}P_{i_{1}\ldots i_{k}}e_{i_{1}}\wedge e_{i_{2}}\wedge\ldots\wedge e_{i_{k}}
\end{equation}
After expanding the product above we obtain
\begin{eqnarray}
P\wedge P\wedge\ldots\wedge P
  & = &  \sum_{i_{1},\ldots,i_{dk}=1}^{dk}P_{i_{1}\ldots i_{k}}P_{i_{k+1}\ldots i_{2k}}\ldots P_{i_{(d-1)k+1}\ldots i_{dk}}e_{i_{1}}\wedge e_{i_{2}}\wedge\ldots\wedge e_{i_{dk}}  \nonumber \\
  & = &  \sum_{\pi\in S_{dk}}\sigma(\pi)P_{\pi(1)\ldots\pi(k)}\ldots P_{\pi(dk-k+1)\ldots\pi(dk)}e_{1}\wedge e_{2}\wedge\ldots\wedge e_{dk}
\end{eqnarray}
where $S_{n}$ is the group of bijections form the set $\{1,\ldots,n\}$ to
itself, and $\sigma:S_{n}\to\{1,-1\}$ is the alternating representation of
this group. From this one can see that
\begin{equation}
\xi(P)=\left|\sum_{\pi\in S_{dk}}\sigma(\pi)P_{\pi(1)\ldots\pi(k)}P_{\pi(k+1)\ldots\pi(2k)}\ldots P_{\pi(dk-k+1)\ldots\pi(dk)}\right|
\end{equation}
Since $\xi((A\otimes\cdots\otimes A)P)=|\det A|\xi(P)$,
it follows that if $P'=(A\otimes\cdots\otimes A)P$ then either both of $\xi(P)$
and $\xi(P')$ are $0$ or none of them. Moreover, if $|\det A|=1$ then $\xi(P)=\xi(P')$.

Now for $N\in 2\mathbb{N}$ and $d\in\mathbb{N}$ we can embed the Hilbert space
$\mathcal{H}=\mathbb{C}^{d}\otimes\cdots\otimes\mathbb{C}^{d}$ of $N$ qudits
into $\mathcal{K}$, the image of
\begin{equation}
\psi = \sum_{i_{1},\ldots,i_{N}=1}^{d}\psi_{i_{1}i_{2}\ldots i_{N}}e_{i_{1}}\otimes\ldots\otimes e_{i_{N}}
\end{equation}
being
\begin{equation}
\tilde{\varphi}(\psi)=P = \sum_{i_{1},\ldots,i_{N}=1}^{d}\psi_{i_{1}i_{2}\ldots i_{N}} e_{i_{1}}\wedge e_{d+i_{2}}\wedge\ldots\wedge e_{(N-1)d+i_{N}}
\end{equation}
Note that the definition of $\tilde{\varphi}$ is slightly different from that
of the map $\varphi$ defined above, but the difference is only a relabelling of
basis elements in $\mathcal{H}$. For this state the value of $\xi(P)$ is
\begin{equation}
\xi(P)=\left|\sum_{\pi_{1},\ldots,\pi_{N}\in S_{d}}\left(\prod_{i=1}^{N}\sigma(\pi_{i})\right)\psi_{\pi_{1}(1)\pi_{2}(1)\ldots\pi_{N}(1)}\psi_{\pi_{1}(2)\pi_{2}(2)\ldots\pi_{N}(2)}\ldots\psi_{\pi_{1}(d)\pi_{2}(d)\ldots\pi_{N}(d)}\right|
\end{equation}
With a slight abuse of notation $\xi(\psi)$ will denote $\xi(\tilde{\varphi}(\psi))$.
Note that for the special case of an {\it even} number of qubits i.e. $d=2$
the squared magnitude of the measure of Eq. (42) is related to the one of Wong and Christensen\cite{Wong}.
For four qubits it is known that this measure boils down to the one denoted by the letter $H$ in the paper of Luque and Thibon\cite{Luque}.

As an example take the following two globally entangled states in $\mathcal{H}$:
\begin{eqnarray}
\psi & = & \frac{1}{\sqrt{d}}(e_{1}\otimes\cdots\otimes e_{1}+e_{2}\otimes\cdots\otimes e_{2}+\ldots+e_{d}\otimes\cdots\otimes e_{d})  \\
\phi & = & \frac{1}{\sqrt{N}}(e_{2}\otimes e_{1}\otimes\cdots\otimes e_{1}+e_{1}\otimes e_{2}\otimes e_{1}\otimes\cdots\otimes e_{1}+\ldots+e_{1}\otimes\cdots\otimes e_{1}\otimes e_{2})
\end{eqnarray}
Then we have
\begin{equation}
\xi(\psi) = \sum_{\pi\in S_{d}}\left(\frac{1}{\sqrt{d}}\right)^{d}=d!d^{-\frac{d}{2}}\neq 0
\end{equation}
but $\xi(\phi) = 0$ for $N>2$ hence we can conclude that $\psi$ and $\phi$ are not SLOCC equivalent.

\subsection{Reduced density matrices}

In many cases one can gain information about a fermionic system by looking
at its single particle reduced density matrix. The mapping described above
takes a pure state of an arbitrary system and maps it to a special fermionic
one therefore the question naturally arises: how are the one particle reduced
density matrices of the two states related to each other?

Let $\mathcal{H}_{i}=\bigwedge^{k_{i}}\mathcal{H}_{i}^{(0)}$,
$\mathcal{H}=\mathcal{H}_{1}\otimes\cdots\otimes\mathcal{H}_{N}$,
$\mathcal{K}^{(0)}=\mathcal{H}_{1}^{(0)}\oplus\cdots\oplus\mathcal{H}_{N}^{(0)}$
and $\mathcal{K}=\bigwedge^{k}\mathcal{K}^{(0)}$ as before ($k=k_{1}+\ldots+k_{N}$), and let
$\{e_{a}\}_{a\in I_{i}}$ be orthonormal bases in $\mathcal{H}_{i}^{(0)}$
respectively where $\dim\mathcal{H}_{i}^{(0)}=|I_{i}|=n_{i}$ ($i\in\{1,\ldots,N\}$),
$n:=n_{1}+\cdots+n_{N}$. The function
$\varphi:\mathcal{H}\to\mathcal{K};(v_{1}\wedge\ldots\wedge v_{k_{1}})\otimes\cdots\otimes(v_{k-k_{N}+1}\wedge\ldots\wedge v_{k})\mapsto v_{1}\wedge\ldots\wedge v_{k}$
maps a general state
\begin{equation}
\psi=\sum_{\substack{i_{1},\ldots,\\ i_{k_{1}}\in I_{1}}}\sum_{\substack{i_{k_{1}+1},\ldots, \\i_{k_{1}+k_{2}}\in I_{2}}}\cdots\sum_{\substack{i_{k-k_{N}+1},\\ \ldots,i_{k}\in I_{N}}}\psi_{i_{1}i_{2}\ldots i_{k}}(e_{i_{1}}\wedge\ldots\wedge e_{i_{k_{1}}})\otimes\cdots\otimes(e_{i_{k-k_{N}+1}}\wedge\ldots\wedge e_{i_{k}})
\end{equation}
to
\begin{equation}
\varphi(\psi)=P=\sum_{\substack{i_{1},\ldots,\\ i_{k_{1}}\in I_{1}}}\sum_{\substack{i_{k_{1}+1},\ldots, \\i_{k_{1}+k_{2}}\in I_{2}}}\cdots\sum_{\substack{i_{k-k_{N}+1},\\ \ldots,i_{k}\in I_{N}}}\psi_{i_{1}i_{2}\ldots i_{k}}e_{i_{1}}\wedge\ldots\wedge e_{i_{k}}
\end{equation}

Let $\rho$ denote the one particle density matrix of the state $P$ and $\rho_{i}$
$(i\in\{1,\ldots,N\})$:
\begin{eqnarray}
\rho & = & \Tr_{2,3,\ldots,k}PP^{*}=(id_{\mathcal{K}^{(0)}}\otimes\Tr\otimes\cdots\otimes\Tr)PP^{*}\in Mat(n,\mathbb{C})  \nonumber \\
\rho_{1} & = & \Tr_{2,3,\ldots,k}\psi\psi^{*}=(id_{\mathcal{H}_{1}^{(0)}}\otimes\Tr\otimes\cdots\otimes\Tr)\psi\psi^{*}\in Mat(n_{1},\mathbb{C})  \nonumber \\
  & \vdots &  \nonumber \\
\rho_{N} & = & \Tr_{1,2,\ldots,k-k_{N},k-k_{N}+2,\ldots,k}\psi\psi^{*}\in Mat(n_{N},\mathbb{C})
\end{eqnarray}

Using identities like
\begin{eqnarray}
\Tr_{2,3,\ldots,k}(e_{i_{1}}\wedge\ldots\wedge e_{i_{k}})(e_{i_{1}}^{*}\wedge\ldots\wedge e_{i_{k}}^{*}) & = & \frac{1}{k}(e_{i_{1}}e_{i_{1}}^{*}+\ldots+e_{i_{k}}e_{i_{k}}^{*})  \nonumber \\
\Tr_{2,3,\ldots,k}(e_{i_{1}}\wedge\ldots\wedge e_{i_{k}})(e_{i'_{1}}^{*}\wedge\ldots\wedge e_{i_{k}}^{*}) & = & \frac{1}{k}e_{i_{1}}e_{i'_{1}}^{*}  \nonumber \\
\Tr_{2,3,\ldots,k}[(e_{i_{1}}\wedge\ldots\wedge e_{k_{1}})\otimes\cdots\otimes(e_{i_{k-k_{N}+1}}\wedge\ldots\wedge e_{i_{k}})] {} & & \nonumber \\
 {} [(e_{i_{1}}^{*}\wedge\ldots\wedge e_{k_{1}}^{*})\otimes\cdots\otimes(e_{i_{k-k_{N}+1}}^{*}\wedge\ldots\wedge e_{i_{k}}^{*})] & = & \frac{1}{k_{1}}(e_{i_{1}}e_{i_{1}}^{*}+\ldots+e_{i_{k_{1}}}e_{i_{k_{1}}}^{*})  \nonumber \\
\Tr_{2,3,\ldots,k}[(e_{i_{1}}\wedge\ldots\wedge e_{k_{1}})\otimes\cdots\otimes(e_{i_{k-k_{N}+1}}\wedge\ldots\wedge e_{i_{k}})] {} & & \nonumber \\
 {} [(e_{i'_{1}}^{*}\wedge\ldots\wedge e_{k_{1}}^{*})\otimes\cdots\otimes(e_{i_{k-k_{N}+1}}^{*}\wedge\ldots\wedge e_{i_{k}}^{*})] & = & \frac{1}{k_{1}}e_{i_{1}}e_{i'_{1}}^{*}
\end{eqnarray}
we can see that
\begin{equation}
\rho=\bigoplus_{i=1}^{N}\frac{k_{i}}{k}\rho_{i}
\end{equation}
Since both sides are linear in the density matrices of the whole systems, this
holds for mixed states too. An alternative normalization is given by $\gamma=k\rho$
and $\gamma_{i}=k_{i}\rho_{i}$, the relation for these is
$\gamma=\bigoplus_{i=1}^{N}\gamma_{i}$ ($i\in\{1,\ldots,N\}$). Observe that the
two states are unentangled iff $\gamma_{i}^{2}=\gamma_{i}$ ($i\in\{1,\ldots,N\}$)
 and $\gamma^{2}=\gamma$ respectively, yielding an alternative proof of our theorem.

\subsection{Physical interpretation}

So far we have regarded this relation of systems of distinguishable particles
and fermionic ones as a purely mathematical one. However, we can interpret it
as a physical property of the particles pretending they are \emph{really}
indistinguishable but for some reason they are not in the same state of some
inner degree of freedom (analogous to isospin) and these inner states are not
mixed by the Hamiltonian.

As the simplest example suppose we have two distinguishable qubits so each
particle has two states, call them $e_{0},e_{1}$ and
$e'_{0},e'_{1}$. We can combine the four states of the two particles
into a single Hilbert space having dimension $4$. If the Hamiltonian governing
the evolution of a state in this space has vanishing matrix elements between
basis states with and without a prime then the subspaces spanned by
$\{e_{0},e_{1}\}$ and $\{e'_{0},e'_{1}\}$ are not mixed and
therefore we can associate an inner quantum number to the states. A state
which is a linear combination of $e_{0}$ and $e_{1}$ can be called a
particle of type one and a state in the span of $\{e'_{0},e'_{1}\}$
can be called a particle of type two. If we take two fermionic particles having
these four single particle states and one of them is of type one and the other
is of type two then they will behave exactly as if they were distinguishable
qubits.

\section{Conclusions}

In this paper we have studied quantum systems containing both distinguishable and identical constituents.
A special subclass of such systems can be studied using the algebraic constructs called Freudenthal systems.
The corresponding physical systems of this subclass are the {\it tripartite } ones that can be embedded to a system consisting of three fermions with six single particle states. Such embedded systems are the ones consisting of a qubit and a bipartite fermionic system with four single particle states, three ordinary qubits,  three bosonic qubits, and two bosonic qubits coupled to an ordinary qubit.
For these systems we presented a {\it complete} classification of SLOCC orbits,
based on the quartic SLOCC invariant arising from the corresponding one of the associated Freudenthal system.
Though the full appreciation of these invariants within the field of quantum information is still missing, we have pointed out that they arise quite naturally
as entropy formulas for black hole solutions in supergravity theories.

As a next step retaining merely the idea of embedding one type of system to the other we studied issues of separability for systems embedded into fermionic ones. We proved that the Pl\"ucker relation for these embedding fermionic systems play an universal role in checking the separability of the embedded ones.
We briefly elaborated also on the interesting problem of splitting of SLOCC classes
when comparing the entanglement properties of the embedding and embedded systems. Such considerations enabled a construction of a class of pure state entanglement measures containing some well-known ones as a limiting case.
Since in many cases we can gain information about a fermionic system by looking at its single particle reduced density matrices, a natural question to be addressed is the one: how these density matrices for the embedding and embedded systems are related?
We have answered this question by presenting an explicit formula.
And at last we proposed a possible physical interpretation of our embedding of one type of system to the other.
The conclusion is that we can regard our embedding trick
as a convenient representation for a quantum system with particles which are {\it really} indistinguishable but for some reason they are not in the same state of some {\it inner} degree of freedom.

\section{Appendix: Cubic Jordan algebras and Freudenthal triples}

\begin{defn}
An algebra (not necessarily associative) $(J,+,\bullet)$ is called a Jordan
algebra if it is commutative and for any two elements $A,B\in J$ the equation
$(A\bullet A)\bullet(A\bullet B)=A\bullet((A\bullet A)\bullet B)$ holds.

A Jordan algebra is cubic if every element satisfies a cubic polynomial equation.
\end{defn}

The Springer construction of cubic Jordan algebras tells us that one can obtain
a cubic Jordan algebra starting with a vector space $V$ equipped with a
suitable cubic form $N:V\to\mathbb{C}$ and a basepoint $c\in V$ such that
$N(c)=1$. One can then define various maps using the linearization
\begin{equation}
N(x,y,z)=\frac{1}{6}\big(N(x+y+z)-N(x+y)-N(x+z)-N(y+z)+N(x)+N(y)+N(z)\big)
\end{equation}
of N, including the Jordan product, but for our purposes only the following two
are needed:
\begin{eqnarray}
(\cdot,\cdot):V\times V\to\mathbb{C} &;& (x,y)=9N(c,c,x)N(c,c,y)-6N(x,y,c)  \\
\cdot^{\sharp}:V\to V &;& \forall y\in J: (x^{\sharp},y)=3N(x,x,y)
\end{eqnarray}
The former is called the trace bilinear form, while the latter is the adjoint
or sharp map.

From a Jordan algebra $J$ over $\mathbb{C}$ one can obtain the Freudenthal
triple system $\mathfrak{M}(J)=\mathbb{C}\oplus\mathbb{C}\oplus J\oplus J$
which is equipped with a skew-symmetric bilinear form and a quartic form
defined by:
\begin{eqnarray}
\{x,y\} & = & \alpha\delta-\beta\gamma+(A,D)-(B,C)  \\
q(x) & = & 2\big((A,B)-\alpha\beta\big)^{2}-8(A^{\sharp},B^{\sharp})+8\alpha N(A)+8\beta N(B)
\end{eqnarray}
where $x=(\alpha,\beta,A,B)$ and $y=(\gamma,\delta,C,D)$ are two elements of
$\mathfrak{M}(J)$. One can also define the unique trilinear map
$T:\mathfrak{M}(J)\times \mathfrak{M}(J)\times \mathfrak{M}(J)\to \mathfrak{M}(J)$
with the property $\{T(x,y,z),w\}=q(x,y,z,w)$ where $q(\cdot,\cdot,\cdot,\cdot)$ is the linearization
of the quartic form $q(\cdot)$.

\begin{defn}
$\Inv(\mathfrak{M}(J))$ is the group of linear transformations which preserve
these forms, i.e. for all $\sigma\in\Inv(\mathfrak{M}(J))$
\begin{equation}
\{\sigma(\cdot),\sigma(\cdot)\}=\{\cdot,\cdot\}\quad\textrm{and}\quad q\circ\sigma=q
\end{equation}
holds.
\end{defn}

Clearly, the construction yields a $2+2\dim J$ dimensional representation of
$\Inv(\mathfrak{M}(J))$ and $q$ is a quartic polynomial invariant under the
action of this group. In the following, we give explicitely the Jordan algebras
needed for the classification of entangled states in the above mentioned quantum
systems. It turns out that all of them can be regarded as a subalgebra of
$J_{3}=M(3,\mathbb{C})$ so we start with this one.

For $A\in J_{3}$ the cubic form $N$ is simply the determinant of the $3\times 3$
matrix, for $A,B\in J_{3}$ the trace bilinear form is given by $(A,B)=\Tr(AB)$
and the explicit form of the sharp map is
\begin{equation}
A^{\sharp}=A^{2}-\Tr(A)A+\frac{1}{2}\big(\Tr(A)^{2}-\Tr(A^{2})\big)I_{3}
\end{equation}

The simplest nontrivial Jordan algebra is $J_{1}=\mathbb{C}$, the cubic
norm of $a\in J_{1}$ is $N(a)=a^{3}$. It follows that the map
$J_{1}\to J_{3};a\mapsto aI_{3}$ is an injective morphism of cubic Jordan
algebras. The next Jordan algebra is $J_{1+1}=\mathbb{C}\oplus\mathbb{C}$,
the value of $N$ on the element $x=(a,b)$ is $N(x)=ab^{2}$. In this case
the image of $x$ in $J_{3}$ is
\begin{equation}
\left[\begin{array}{ccc}
a & 0 & 0  \\
0 & b & 0  \\
0 & 0 & b
\end{array}\right]
\end{equation}
The third Jordan algebra is $J_{1+1+1}=\mathbb{C}\oplus\mathbb{C}\oplus\mathbb{C}$
Here for $x=(a,b,c)$ the value of $N$ is $N(x)=abc$. This is nothing else but
the determinant of the matrix
\begin{equation}
\left[\begin{array}{ccc}
a & 0 & 0  \\
0 & b & 0  \\
0 & 0 & c
\end{array}\right]
\end{equation}
which shows us the isomorphism between $J_{1+1+1}$ and the subalgebra of diagonal
matrices in $J_{3}$. The last Jordan algebra we consider is
$J_{1+2}=\mathbb{C}\oplus Q_{4}$, where $Q_{4}$ is any $4$ dimensional complex
vector space with a nondegenerate quadratic form. It is convinient to let $Q_{4}$
be the vector space of $2\times 2$ matrices, and the quadratic form be the
determinant. A general element in $J_{1+2}$ is therefore $x=(a,A)$, and its cubic
norm is $N(x)=a\det A$. For this Jordan algebra the inclusion map is given by
\begin{equation}
(a,
\left[\begin{array}{cc}
A_{11} & A_{12}  \\
A_{21} & A_{22}
\end{array}\right])
\mapsto
\left[\begin{array}{ccc}
a & 0 & 0  \\
0 & A_{11} & A_{12}  \\
0 & A_{21} & A_{22}
\end{array}\right]
\end{equation}
the image being a block diagonal matrix built from a $1\times 1$ and a $2\times 2$
block.

What makes these constructions useful for studying entanglement is the fact
that the $\Inv$ groups are almost SLOCC groups of various quantum systems.
Namely, $\Inv(\mathfrak{M}(J_{1}))\simeq SL(2,\mathbb{C})$,
$\Inv(\mathfrak{M}(J_{1+1}))\simeq SL(2,\mathbb{C})^{2}$,
$\Inv(\mathfrak{M}(J_{1+1+1}))\simeq SL(2,\mathbb{C})^{3}$,
$\Inv(\mathfrak{M}(J_{1+2}))$ is isomorphic to $SL(2,\mathbb{C})\times SL(4,\mathbb{C})$
and finally
$\Inv(\mathfrak{M}(J_{3}))\simeq SL(6,\mathbb{C})$. The SLOCC groups are obtained
by replacing $SL(n,\mathbb{C})$ with $GL(n,\mathbb{C})$. Excluding the $0$ vector
we can identify four SLOCC orbits (characterized by the rank of vectors) in each
case except the first one where the rank 2 orbit is absent. Vectors of rank 4 are
the ones for which $q$ does not vanish, all others are of rank at most 3. $x$ has
rank 3 iff $q(x)=0$ and $T(x,x,x)\neq 0$. Vectors with rank 2 are the ones for
which $T(x,x,x)$ vanishes but there exists $y$ such that $3T(x,x,y)+\{x,y\}x\neq 0$.
If there is no such $y$ then $x$ has rank 1.

                                                                                \section{Acknowledgements}                                                      Financial support from the Orsz\'agos Tudom\'anyos Kutat\'asi Alap              (grant numbers T047035, T047041, T038191) is                                    gratefully acknowledged.                                                        \vfill

\end{document}